\newtheorem{theorem}{Theorem}
\title{
Transactive Framework for Dynamic Energy Storage Allocation for Critical Load Management
}
\author{Arnab Dey$^{1}$, Vivek Khatana$^{1}$, Ankur Mani$^{2}$  and Murti V. Salapaka$^{1}$
\thanks{This work is supported by Advanced Research Projects Agency-Energy OPEN through the project titled "Rapidly Viable Sustained Grid" via grant no. DE-AR0001016.}
\thanks{$^{1}$ Arnab Dey \{{\tt\small dey00011@umn.edu}\}, Vivek Khatana \{{\tt\small khata010@umn.edu}\}, Murti V. Salapaka\{{\tt\small murtis@umn.edu}\} are with Department of Electrical and Computer Engineering, University of Minnesota, Twin Cities, USA, and $^{2}$ Ankur Mani \{{\tt\small amani@umn.edu\}} is with Department of Department of Industrial and Systems Engineering, University of Minnesota, Twin Cities, USA,
}
}
\begin{document}

\maketitle
\thispagestyle{empty}
\pagestyle{empty}

\begin{abstract}
Increased penetration of Distributed Energy Resources (DER) and Renewable Energy Systems (RES) transforming the conventional distribution grid into a transactive framework supervised by a distribution system operator (DSO). Although the emerging transactive energy management techniques improve the grid reliability, the inherent uncertainty of RES poses a challenge in meeting the power demand of the critical infrastructure in the microgrid unless sufficient battery energy storage is maintained. However, maintaining expensive battery storage increases the operating cost of the DSO. In this article, we propose a cost-effective dynamic resource allocation strategy to optimize the battery reserve requirement while ensuring the critical demand is met with a provable guarantee. Our proposed scheme enables the DSO to optimize the RES and battery reserve allocation to eliminate the risk of over or underproduction. We present numerical simulations under three different scenarios of multiple microgrids with uncertain renewable generation. The simulation results demonstrate the efficacy of the proposed transactive stochastic control algorithm.
\\\\
\textit{Index terms:} Microgrids, optimization, partial differential equations, photovoltaic, renewable energy sources, transactive grid, uncertainty minimization, wind energy.
\end{abstract}
\section{INTRODUCTION}
\IEEEPARstart{I}n recent times, the proliferation of distributed energy resources (DER), propelled by a mounting interest towards the generation and utilization of clean renewable energies, has fostered a paradigm change in the operation and control of power distribution networks. Renewable energy sources (RES), like solar photo-voltaic systems and wind energy systems, reduce the adverse environmental effects of traditional energy sources; integration of DERs also offer improved grid assistance, efficiency and reliability. As renewables are expected to contribute $60\%$ of total generation by $2050$ \cite{irena2019energytransformation}, a rapid growth of `prosumers', equipped with distributed RES, is envisaged to transform the conventional energy systems (CES) to \textit{transactive energy systems} (TES). According to U.S. Department of Energy GridWise Architecture Council, TES are defined as the systems of coordinated control mechanisms that allow dynamic balance of supply and demand, using value as a key operational parameter \cite{gridwise}. However, due to the uncertain and intermittent nature of the RES, maintaining the balance between power supply and demand in a distributed framework is challenging if extra measures such as mechanisms for ancillary services are not present \cite{irena, yang2018battery}. Moreover, climate related catastrophic events are increasing in frequency and magnitude \cite{kishore2018mortality,campbell2012weather}. In this context, collaborative operation across multiple critical facilities allows for pooling renewable resources and reduces uncertainty and impact of idiosyncratic generation shocks due to catastrophic failure at a facility. For such critical infrastructures, it is important to guarantee needed power and thus managing uncertainty of renewable energy sources needs to be addressed. Therefore, collaborative operation, intertwined with uncertainty mitigation techniques, provides an attractive solution when RES are used to supply power.

Energy storage allocation for demand-supply balance, considering fluctuating renewable generation, is of significant interest presently to the researchers. In \cite{salinas2013dynamic}, a dynamic energy management scheme, considering stochastic load demands and renewable generations of multiple prosumers, is proposed based on Lyapunov optimization theory. An alternating direction method of multipliers (ADMM) based optimization method is proposed in \cite{liu2019secure} with the focus of minimizing energy trading cost in a TES framework. Prior works have considered game theoretic approaches \cite{divshali2017multi, zhang2019dynamic}, stochastic programming \cite{wang2019incentive} and, mixed-integer linear programming \cite{renani2017optimal} to maximize the profit of each agent in TES with high penetration of RES; however without explicitly modelling the RES output and considering real-time power balance. In \cite{hu2019agent}, a multi-agent optimization technique is presented to improve demand-supply balance in distribution networks with DERs. An analytical approach, based on cost-benefit analysis, is devised in \cite{akter2016hierarchical} to improve energy sharing instead of profit maximization, within a residential microgrid, however, stochasticity of the renewable generation was not addressed. In \cite{xu2019novel}, an intelligent network of agents, connected to a DC microgrid, is considered to conduct power trading based on min-consensus algorithm. Here, each agent broadcasts power demand request and based on the request received from the neighbors, optimal power dispatch solution is obtained. Reference \cite{sajjadi2016transactive} provides a case study of a TES where the distribution system operator controls the energy trading between transactive nodes in a day-ahead market and benefits of the participants are estimated based on locational marginal price. In \cite{zhang2016optimal,maleki2016optimal}, authors have proposed Monte Carlo simulation methodology for optimal allocation of battery energy storage under fluctuating wind and solar power. Optimization algorithms emphasizing on overall investment and operational cost reduction is the focus of \cite{zhang2016optimal,maleki2016optimal,geem2012size,ghaffari2015energy,hedman2006comparing}. To address the stochastic nature of the solar radiation and wind, \cite{verdejo2016stochastic,olsson2010modeling, dong2016application,salameh1995photovoltaic} incorporate probabilistic approaches to model the renewable power output.  However, most of the TES framework found in literature do not consider explicit probabilistic models for the renewable generation. Moreover, most of the existing literature either focus on cost optimization for energy trading in TES without addressing the energy storage requirement and supply guarantee required by the critical infrastructures (e.g. hospitals), or, solely consider energy storage optimization without exploring the benefit of utilization of the same in transactive architecture. Here, an approach for optimal energy storage allocation to mitigate the uncertainty of meeting load demands of critical infrastructures in a TES, due to stochastic nature of renewable generations, is required.

To this end, in this article we focus on a TES framework consisting of centralized battery storage and multiple number of small-scale microgrids with critical infrastructures that require guarantee of supplying the demand requested at a future time. Each microgrid is equipped with RES which are uncertain in nature, and has no recourse to local energy storage, thus posing a risk of not being able to meet the local critical demand. To mitigate this uncertainty, the distribution system operator (DSO), which supervises the whole TES, allocates battery storage to support the microgrids and also enables the microgrids to trade renewable energies with each other. The focus of the article is on the optimal allocation of battery storage by the DSO to meet the power demand of each microgrid at the requested time.

\noindent The major contributions are summarized below:

\noindent(i) We study a TES framework with multiple prosumers managed by DSO to provide a solution to the problem of meeting power demand of each prosumer in real-time with provable guarantees, in the presence of stochasticity of the renewable energy sources. Such a guarantee is essential for critical applications. 

\noindent(ii) We also provide exact solutions for continuous time resource allocations for the DSO to manage renewable and battery resources, eliminating risks of underproduction and overproduction in the presence of uncertainty in renewable generation. Our solution builds upon stochastic control techniques and we also provide an approximation algorithm for discrete time resource management. We remark that such a solution is pertinent for supporting critical infrastructures.

The rest of the paper is organized as follows: description of the TES architecture is given in Section~\ref{sec:syst_desc}. A formal problem formulation is introduced in Section~\ref{sec:prob_form}. Section~\ref{sec:solu_meth} describes the proposed resource allocation algorithms and the simulation results are shown in Section~\ref{sec:simu_resu}. Section~\ref{sec:conc_lude} concludes the article with concluding remarks.
\section{System Description}\label{sec:syst_desc}
Renewable based large distribution networks, with centralized battery storage \cite{shadmand2014multi}, consist of several small-scale microgrids geographically apart from each other, under the supervision of DSOs. We consider such a distributed framework constituting a TES, where the DSO can operate the network more efficiently than conventional one by utilising the ability of cooperation between the microgrids. Fig.~\ref{fig:tran_arch} shows the hierarchical architecture of the TES under consideration, which contains $N_m$ microgrids and a DSO which controls the energy transaction among the microgrids and dispatches the centralized battery units, based on the requirements, to maintain system reliability. Each microgrid has RES to supply its local demands. However, the uncertain nature of the RES poses the risk of either power deficit, when generation is less than the demand, or excess power that forces curtailment, when generation exceeds the demand. In CES, this risk is minimized by using independent energy storage systems such as batteries for individual microgrids \cite{faccio2018state}, thus requiring large ancillary battery energy storage systems (BESS) \cite{birnie2014optimal,olatomiwa2016energy, oh2020theoretical}. DSO controls the energy transactions and the dispatch of batteries when necessary. A communication layer is used to transmit the real-time power measurements of the net generation and demands of the individual microgrids to aid the DSO for efficient resource allocation. To reduce the energy cost, the DSO needs to estimate the optimal battery storage while ensuring that the load demands of the microgrids are met, irrespective of the irregularities of the renewable sources. In the subsequent sections we formally introduce the problem and provide a solution.
\begin{figure*}[h!]
    \centering
    \includegraphics[scale=0.463,trim={0.0cm 0.0cm 0.0cm 0.0cm},clip]{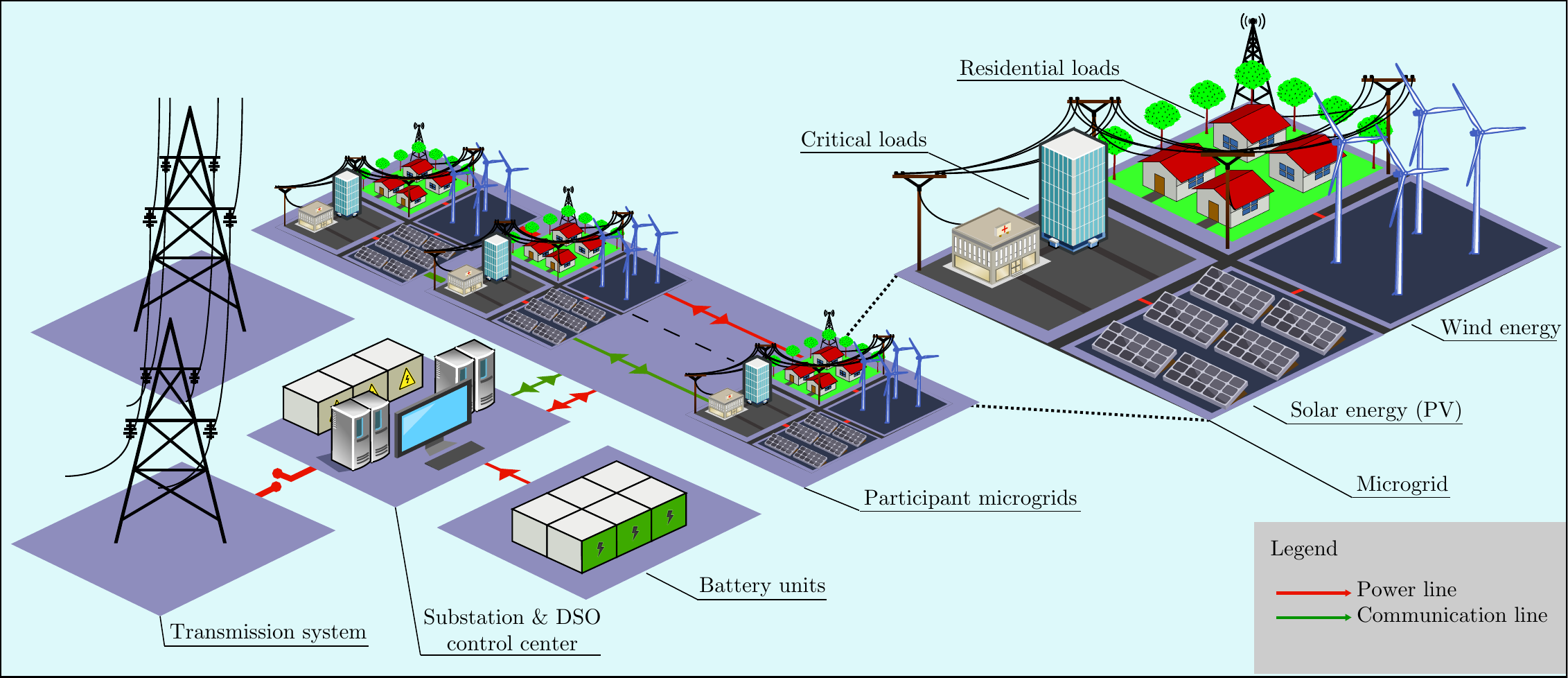}
    \caption{TES architecture with centralized battery units and distributed microgrids with renewable generations and critical loads}
    \label{fig:tran_arch}
\end{figure*}
\section{Problem formulation}\label{sec:prob_form}
Let $\mathcal{I} = \{1,2,\ldots,N_m\}$ denote the set of  microgrids present in the distribution network. Each microgrid has its own renewable generation unit (ReGU) producing $P_{gi}(t), i \in \mathcal{I}$, units of power at time $t$, which is stochastic in nature and is modeled by Geometric Brownian Motion (GBM) given by:
\begin{align}\label{eq:gbm}
    \text{d}P_{gi}(t) = \mu_{gi}P_{gi}(t)\text{d}t+\sigma_{gi}P_{gi}(t)\text{d}W_{ti},
\end{align}
with an initial condition of $P_{gi}(0) > 0$, for all $i \in \mathcal{I}$. The constants $\mu_{gi}$ and $\sigma_{gi}$ are the drift and the volatility terms respectively and $W_{ti}$ denotes a Weiner process. We empirically verify this model assumption with renewable generation data in Section~\ref{sec:simu_resu}. The Weiner processes $W_{ti}$, $W_{tj}$ are correlated according to
\begin{align}\label{eq:corr_wtij}
    \text{d}W_{ti}\text{d}W_{tj} = \rho_{ij}\text{d}t,\ i,j \in \mathcal{I},
\end{align}
where $\rho_{ij}$ is the correlation coefficient. Therefore,
\begin{align}\label{eq:corr_pgij}
    \text{d}P_{gi}\text{d}P_{gj} = \sigma_{i}\sigma_{j}\rho_{ij}P_{gi}P_{gj}\text{d}t,\ i,j \in \mathcal{I}.
\end{align}
Each microgrid also postulates a constant critical power demand of $D_{ci}$ units, $i \in \mathcal{I}$, at a future time denoted as $T_f$. In a transactive setting, the objective of the system is to satisfy power demand and supply balance at a predefined future time, $T_f$, \textit{almost surely} with minimum amount of resources collectively. In case of $P_{gi}(T_f) \geq D_{ci}$ for $i^{th}$ microgrid, it can ensure sustained operation of its own load demand. On the other hand when $P_{gi}(T_f) < D_{ci}$, to avoid the risk of microgrid $i$ not being able to meet the power demand of its critical loads, it can be supported by the energy storage systems maintained by the DSO. Here, the microgrids participate in a contract with the DSO, which controls the energy transactions and the battery allocation for the whole system by {\it independently} maintaining a portfolio consisting of fractions of ReGUs from each participant microgrid and centralized battery units. In particular, as shown in Fig.~\ref{fig:tran_arch}, the ReGUs in the portfilo of the DSO are physically located at the premises of the individual microgrids, thus following same dynamics as given in~(\ref{eq:gbm}), while the battery storage, which is shared by all the microgrids, is fully controlled and allocated by the DSO.

In the case of CES where power sharing among the microgrids is not possoble and each microgrid is treated independently by the DSO \textit{i.e.} available power in the portfolio maintained by the DSO for each microgrid is:
\begin{align}\label{eq:conv_port}
    \hat{V}_i(P_{gi}(t),t) = \hat{a}_i(t)P_{gi}(t)+\hat{b}_i(t)P_b,\ i \in \mathcal{I},
\end{align}
and $ 0 \leq t \leq T_f$, where $\hat{a}_i(t)$ and $\hat{b}_i(t)$ are the number of ReGUs and battery units allocated for microgrid $i$. In case of CES, the constraint of \textit{rated power conservation} is given by:
\begin{align}\label{eq:ces_pconv}
    \text{d}\hat{a}_i(t)P_{gi}(t) + \text{d}\hat{b}_i(t) P_b = 0, \ \mbox{for all} \  i \in \mathcal{I},
\end{align}
and the total amount of battery units maintained by DSO is:
\begin{align*}
    \hat{b}(t) = \sum_{i=1}^{N_m}\hat{b}_i(t),\ \forall t \in [0,T_f]. 
\end{align*}
To ensure the power demand of $i^{th}$ microgrid is met at the requested time $T_f$, the DSO needs to supply the deficit of $D_{ci} - P_{gi}(T_f)$ units of power from the portfolio if the microgrid is not capable of supplying its own demand from the renewable generation, and $0$ otherwise. Therefore the power portfolio has to satisfy the following terminal condition,
\begin{align}\label{eq:conv_tcon}
   \hspace{-0.1in} \hat{V}_i(P_{gi}(T_f),T_f) = 
    \begin{cases}
    0                        & \text{if $P_{gi}(T_f) \geq D_{ci}$}\\
    D_{ci} - P_{gi}(T_f)           & \text{if $P_{gi}(T_f) < D_{ci}$}.
    \end{cases}
\end{align}
The DSO needs to decide the number $\hat{a}_i$ and $\hat{b}_i$ such that $\hat{V}_i$ satisfy the terminal condition~\eqref{eq:conv_tcon} under the constraint~\eqref{eq:ces_pconv}.
\section{Solution Methodology}\label{sec:solu_meth} 
\subsection{Conventional Energy System}\label{subsec:solu_conv}

As the amount of renewable generation at time $T_f$ is not deducible and also not known a priori due to the inherent uncertainty \cite{black1973pricing}, we provide a policy of maintaining the number of battery units, $\hat{b}_i(t)$ and the number of ReGUs, $\hat{a}_i(t)$ for $t \in [0, T_f)$, satisfying the terminal condition~(\ref{eq:conv_tcon}), such that the critical demand of $D_{ci},\ i \in \mathcal{I}$, units is met \textit{almost-surely} at time $T_f$ in the following theorem.
\begin{theorem}\label{thm:bs_sol}
Under the rated power conservation constraint~(\ref{eq:ces_pconv}), suppose $\hat{a}_i(t)$ and $\hat{b}_i(t)$ are given by:
\begin{align*}
    \hat{a}_i(t) &= -\Phi\bigg[\textstyle\frac{\ln\left(\frac{D_{ci}}{P_{gi}(t)}\right)-\textstyle\frac{\sigma_{gi}^2}{2}(T_f - t)}{\sigma_{gi} \sqrt{(T_f - t)}}\bigg],\\ 
    \hat{b}_i(t) &= \textstyle\frac{D_{ci}}{P_b}\Phi\bigg[ \frac{\ln\left(\frac{D_{ci}}{P_{gi}(t)}\right)+\textstyle\frac{\sigma_{gi}^2}{2}(T_f - t)}{\sigma_{gi} \sqrt{(T_f - t)}}\bigg],\ t \in [0,T_f)
\end{align*}
where, $\Phi(\cdot)$ is the standard normal cumulative distribution function, then the terminal condition specified by~(\ref{eq:conv_tcon}) is satisfied almost surely at time $T_f$.
\end{theorem}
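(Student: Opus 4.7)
The plan is to recognize Theorem~\ref{thm:bs_sol} as the Black--Scholes replication of a European put option with strike $D_{ci}$, maturity $T_f$, zero interest rate, underlying $P_{gi}$ and volatility $\sigma_{gi}$. The terminal condition~(\ref{eq:conv_tcon}) is exactly the put payoff $(D_{ci}-P_{gi}(T_f))^{+}$, the rated power conservation constraint~(\ref{eq:ces_pconv}) is the self-financing condition, and $\hat a_i$, $\hat b_i$ will emerge as the delta hedge in the risky asset $P_{gi}$ and the holding in the constant asset $P_b$, respectively.

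First I would treat $\hat{V}_i$ as a smooth function of $(P_{gi},t)$ and compute $\text{d}\hat{V}_i$ in two ways. Applying It\^o's lemma to $\hat{V}_i(P_{gi},t)$ with~(\ref{eq:gbm}) and $(\text{d}P_{gi})^2=\sigma_{gi}^2 P_{gi}^2\,\text{d}t$ gives
\begin{align*}
\text{d}\hat{V}_i = \Bigl(\tfrac{\partial \hat{V}_i}{\partial t}+\mu_{gi}P_{gi}\tfrac{\partial \hat{V}_i}{\partial P_{gi}}+\tfrac{1}{2}\sigma_{gi}^2 P_{gi}^2\tfrac{\partial^2 \hat{V}_i}{\partial P_{gi}^2}\Bigr)\text{d}t+\sigma_{gi}P_{gi}\tfrac{\partial \hat{V}_i}{\partial P_{gi}}\,\text{d}W_{ti}.
\end{align*}
On the other hand, differentiating $\hat{V}_i=\hat a_i P_{gi}+\hat b_i P_b$ and using~(\ref{eq:ces_pconv}) to kill the $\text{d}\hat a_iP_{gi}+\text{d}\hat b_iP_b$ contribution yields $\text{d}\hat{V}_i=\hat a_i\,\text{d}P_{gi}$. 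Matching drift and diffusion coefficients of the two expressions (the $\mu_{gi}$ terms cancel automatically) gives $\hat a_i=\partial \hat{V}_i/\partial P_{gi}$ and the Black--Scholes PDE with zero interest rate,
\begin{align*}
\tfrac{\partial \hat{V}_i}{\partial t} + \tfrac{1}{2}\sigma_{gi}^2 P_{gi}^2 \tfrac{\partial^2 \hat{V}_i}{\partial P_{gi}^2} = 0, \qquad \hat{V}_i(P_{gi},T_f)=(D_{ci}-P_{gi})^{+}.
\end{align*}

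Next I would solve this terminal-value problem via the risk-neutral representation $\hat{V}_i(P_{gi}(t),t)=\mathbb{E}\!\left[(D_{ci}-\tilde P_{gi}(T_f))^{+}\mid \mathcal{F}_t\right]$, where $\tilde P_{gi}(T_f)=P_{gi}(t)\exp\!\bigl(-\tfrac{\sigma_{gi}^2}{2}(T_f-t)+\sigma_{gi}\sqrt{T_f-t}\,Z\bigr)$ with $Z\sim\mathcal{N}(0,1)$. The event $\{\tilde P_{gi}(T_f)<D_{ci}\}$ becomes $\{Z<-d_2\}$ with $d_2=[\ln(P_{gi}(t)/D_{ci})-\tfrac{\sigma_{gi}^2}{2}(T_f-t)]/[\sigma_{gi}\sqrt{T_f-t}]$. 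Splitting the expectation and completing the square in the $P_{gi}$-term (a standard log-normal integral) produces the put formula $\hat{V}_i(P_{gi},t)=D_{ci}\Phi(-d_2)-P_{gi}\Phi(-d_1)$, where $d_1=d_2+\sigma_{gi}\sqrt{T_f-t}$. Reading off $\hat a_i=\partial \hat{V}_i/\partial P_{gi}=-\Phi(-d_1)$ and $\hat b_i=(\hat{V}_i-\hat a_i P_{gi})/P_b=(D_{ci}/P_b)\Phi(-d_2)$ reproduces the expressions in the theorem statement after writing $-d_1$ and $-d_2$ in the paper's notation.

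The main obstacle I expect is the terminal verification, i.e.\ establishing~(\ref{eq:conv_tcon}) \emph{almost surely}. As $t\uparrow T_f$, the arguments $-d_1,-d_2$ diverge to $+\infty$ on $\{P_{gi}(T_f)<D_{ci}\}$ and to $-\infty$ on $\{P_{gi}(T_f)>D_{ci}\}$, so both $\Phi(-d_1)$ and $\Phi(-d_2)$ converge to the indicator of $\{P_{gi}(T_f)<D_{ci}\}$; consequently $\hat{V}_i(P_{gi}(T_f),T_f)$ equals $D_{ci}-P_{gi}(T_f)$ on that event and $0$ on its complement. Since the log-normal law of $P_{gi}(T_f)$ assigns zero probability to $\{P_{gi}(T_f)=D_{ci}\}$, the limit holds on a set of full measure, giving the almost-sure equality asserted in~(\ref{eq:conv_tcon}).
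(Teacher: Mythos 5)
Your proposal is correct and its core coincides with the paper's argument: both derive $\hat a_i=\partial\hat V_i/\partial P_{gi}$ and the degenerate Black--Scholes PDE $\partial_t\hat V_i+\tfrac12\sigma_{gi}^2P_{gi}^2\,\partial^2_{P_{gi}}\hat V_i=0$ from the rated-power-conservation (self-financing) constraint and the cancellation of the $\text{d}W_{ti}$ term, and both end with the zero-rate put formula $\hat V_i=D_{ci}\Phi(-d_2)-P_{gi}\Phi(-d_1)$ from which $\hat a_i$ and $\hat b_i$ are read off. The only real divergence is in how the terminal-value problem is solved: the paper changes variables ($\tau=T_f-t$, $x_i=\ln P_{gi}$, an exponential substitution) to reduce the PDE to the heat equation and then evaluates the heat-kernel convolution integrals $I_1,I_2$ explicitly, whereas you invoke the Feynman--Kac/risk-neutral representation $\hat V_i=\hat{\mathbb E}[(D_{ci}-P_{gi}(T_f))^+\mid\mathcal F_t]$ for the driftless GBM and compute the same two log-normal integrals directly. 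The two routes are computationally identical in the end; yours is shorter and dovetails with the measure change the paper itself introduces for the TES case (Theorem~2), while the paper's is self-contained at the PDE level. One thing you add that the paper leaves implicit is the verification that the terminal condition holds \emph{almost surely}: your observation that $\Phi(-d_1),\Phi(-d_2)\to\mathds{1}\{P_{gi}(T_f)<D_{ci}\}$ as $t\uparrow T_f$ off the null set $\{P_{gi}(T_f)=D_{ci}\}$ is a worthwhile supplement, since the paper only constructs the solution for $t<T_f$ and asserts the boundary behaviour by construction.
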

\begin{proof}
Please see supplementary material.
\end{proof}
It follows that, in case of conventional grid, the total amount of battery units required by the DSO for all $t \in [0, T_f)$ is:
\begin{align}
    \hat{b}(t) =\sum_{i=1}^{N_m} \textstyle\frac{D_{ci}}{P_b}\Phi\bigg[ \frac{\ln\left(\frac{D_{ci}}{P_{gi}(t)}\right)+\textstyle\frac{\sigma_{gi}^2}{2}(T_f - t)}{\sigma_{gi} \sqrt{(T_f - t)}}\bigg],
\end{align}
while the total power portfolio value at $t=T_f$ is:
\begin{align*}
    \hat{V}(P_{g1}(T_f),\ldots,P_{gN_m}, T_f) = \sum_{i=1}^{N_m}\text{max}(D_{ci}-P_{gi}(T_f),0).
\end{align*}
\section{Simulation results}\label{sec:simu_resu}
\subsection{Case study $1$: $P_{g1}(T_f) \geq D_{c1}, P_{g2}(T_f) \geq D_{c1}$}\label{subsec:simu_upup}
Fig.~\ref{fig:pg_upup} illustrates the scenario where the renewable generations of both the microgrids are sufficient to meet the respective load demands at $t=T_f$. We remark that the algorithm and the methodology has no knowledge of the terminal state {\it  a priori} at any time $t<T_f$. It is seen that the battery units requirement as well as the power portfolio value become $0$ at $t=T_f$ as expected. Similarly, the other cases can also be shown which we omit here to conserve space.
\begin{figure}[!h]
     \centering
     \subfloat[Battery units requirement: No battery is required at $T_f$ as both the microgrids have sufficient generation to supply own demand. Standard deviation of $10,000$ realizations of $P_{g1}(t)$ and $P_{g2}(t)$ are shown in the shaded regions with the mean values plotted with solid lines. Zoomed inset shows the $95$\% confidence interval (CI) around the mean of battery requirement. The narrow CI indicates high confidence in average battery requirement of our proposed scheme.]{\includegraphics[scale=1.0,trim={0.4cm 0cm 0cm 0cm},clip]{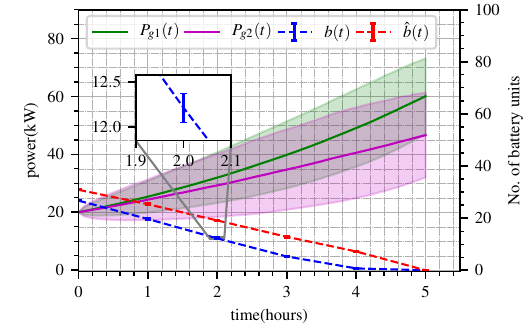}\label{fig:b_pg_upup}}\hspace{0.5cm}
     \subfloat[Power portfolio plot: No amount of power to be maintained in the portfolios at time $T_f$ as both the microgrids are capable of supplying their own demand. The narrow $95$\% CI around the mean of power portfolio value, as shown in zoomed inset, indicates high confidence in the average power requirement in the portfolio of our proposed scheme.]{\includegraphics[scale=1.0,trim={0.4cm 0cm 0cm 0.07cm},clip]{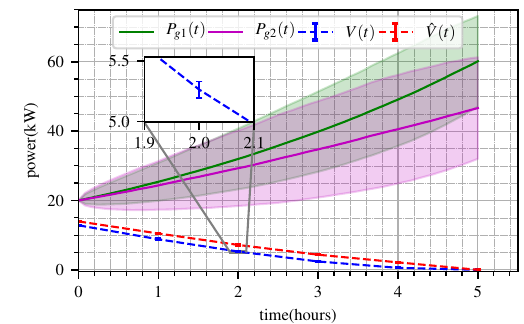}\label{fig:v_pg_upup}}
     \caption{Simulation result for $10000$ realizations of $P_{g1}(t), P_{g2}(t)$ based on~(\ref{eq:gbm}) with $D_{c1}=20\text{ kW},D_{c2}=25$ kW, $T_f=5$ hrs and $P_{g1}(T_f)>D_{c1}, P_{g2}(T_f)>D_{c2}$ which is unknown \textit{a priori} at time $t < T_f$.}
     \label{fig:pg_upup}
\end{figure}

\noindent The results corroborate the effectiveness and the utility of the framework presented with respect to battery resources requirements while providing provable guarantees that the power needs will be met at the end of the prescribed horizons. Therefore, it can be concluded that our proposed algorithm enables DSO to mitigate the uncertainty of supplying the demand of the microgrids at the requested time while reducing the battery requirement, thus enhancing grid reliability and efficiency.
\section{Conclusion}\label{sec:conc_lude}
This article develops a novel TES framework containing multiple microgrids, equipped with DER and critical infrastructure loads, and an DSO with centralized battery storage, with the focus of efficient energy storage allocation in real-time, to meet the load demands of the microgrids at the requested time with \textit{almost sure guarantee}. Explicit stochastic model of wind generation is considered to represent the uncertainties of renewable generations. The proposed algorithms mitigate the uncertainties of meeting the load demands of the microgrids, due to intermittent renewable generations, at the requested time, with \textit{almost sure} guarantee.
\appendices
\section{Probabilities and change factors calculation}\label{app:unkn_prob}
Applying Ito's lemma \cite{gardiner2009stochastic} to $\text{d}(\log P_{gi}(t))$ and ignoring \textit{h.o.t.},
\begin{align*}
    \textstyle \text{d}(\log P_{gi}(t)) &= \textstyle \frac{\partial (\log P_{gi}(t))}{\partial P_{gi}(t)}\text{d}P_{gi}(t) + \frac{\partial^2 (\log P_{gi}(t))}{\partial P_{gi}^2(t)}\frac{(\text{d}P_{gi}(t))^2}{2}
\end{align*}
For all $i \in \mathcal{I}$,
\begin{align}
    \textstyle \text{d}(\log P_{gi}(t)) &= \textstyle \frac{\text{d}P_{gi}(t)}{P_{gi}(t)}-\frac{1}{2}\frac{1}{P_{gi}^2(t)}\sigma_{gi}^2P_{gi}^2(t)\text{d}t \nonumber \\
    &= \textstyle - \frac{\sigma_{gi}^2}{2}\text{d}t + \sigma_{gi} \text{d}\hat{W}_{ti} \nonumber\\
    &= \textstyle - \frac{\sigma_{gi}^2}{2}\text{d}t + \sigma_{gi} \sqrt{dt}Z,
\end{align}
where $Z \sim \mathcal{N}(0,1)$. Let $X_n^i = \log P_{gi}(t_n)$. Therefore,
\begin{align*}
    \textstyle \hat{\mathbb{E}}[X_{n+1}^i-X_n^i] = -\frac{\sigma_{gi}^2}{2}\Delta t,\ 
    \text{Var}(X_{n+1}^i-X_n^i) = \sigma_{gi}^2\Delta t.
\end{align*}
Under the transformed probability measure~(\ref{eq:neut_meas}), from~(\ref{eq:corr_wtij}),
\begin{align*}
    \text{Corr}((X_{n+1}^i-X_n^i),(X_{n+1}^j-X_n^j)) = \rho_{ij}\sigma_{gi}\sigma_{gj}\Delta t
\end{align*}
Let $\mathcal{I}_i$ denote the set of indices of probabilities associated with upward movement of $P_{gi}(t_n)$ and $\mathcal{I}_{ij} = (\mathcal{I}_i \cap \mathcal{I}_j) \cup (\mathcal{I}_i^c \cap \mathcal{I}_j^c)$, for all $i,j \in \mathcal{I}$. Denoting $h_i = \log u_i, i \in \mathcal{I}$, 
\begin{subequations}
\begin{align}
    \textstyle -\frac{\sigma_{gi}^2}{2}\Delta t &= \textstyle h_i(\sum\limits_{k \in \mathcal{I}_i}P_k-\sum\limits_{k \notin \mathcal{I}_i}P_k)\\
    \textstyle \sigma_{gi}^2\Delta t &= \textstyle h_i^2(\sum\limits_{k=1}^{2^{N_m}}P_k) -h_i^2(\sum\limits_{k \in \mathcal{I}_i}P_k-\sum\limits_{k \notin \mathcal{I}_i}P_k)^2\\
    \textstyle \rho_{ij}\sigma_{gi}\sigma_{gj}\Delta t &= \textstyle h_ih_j(\sum\limits_{k \in \mathcal{I}_{ij}}P_k-\sum\limits_{k \notin \mathcal{I}_{ij}}P_k)\\
    1 &= \textstyle \sum\limits_{k=1}^{2^{N_m}}P_k
\end{align}
\end{subequations}
We can solve these set of equations for all $i,j \in \mathcal{I}$ to find the values of $P_k, k \in \{1,2,\ldots,2^{N_m}\}$, $u_i,d_i, i \in \mathcal{I}$.

\bibliography{references}
\end{document}